\def\fnum@figure{\textcolor{subsectioncolor}{\sf Fig.~\thefigure}}
\def\fnum@table{\textcolor{subsectioncolor}{\sf TABLE~\thetable}}
\newtheorem{theorem}{Theorem}
\newtheorem{definition}{Definition}
\newtheorem{lemma}{Lemma}
\newtheorem{remark}{Remark}
\DeclareMathOperator*{\argmin}{arg\,min}
\def\BibTeX{{\rm B\kern-.05em{\sc i\kern-.025em b}\kern-.08em
    T\kern-.1667em\lower.7ex\hbox{E}\kern-.125emX}}
\begin{document}
\title{Generalization of Safe Optimal Control Actions on Networked Multi-Agent Systems}
\author{Lin Song, Neng Wan, Aditya Gahlawat, Chuyuan Tao, Naira Hovakimyan, and Evangelos A. Theodorou
\thanks{This work is  supported by the Air Force Office of Scientific Research (AFSOR), National Aeronautics and Space Administration (NASA) and National Science Foundation's National Robotics Initiative (NRI) and Cyber-Physical Systems (CPS) awards \#1830639, \#1932529, and \#1932288. }
\thanks{Lin Song, Neng Wan, Aditya Gahlawat, Chuyuan Tao, and Naira Hovakimyan are with the Department of Mechanical Science and Engineering, University of Illinois at Urbana Champaign, Urbana, IL 61801 USA (e-mail:\{linsong2, nengwan2, gahlawat, chuyuan2, nhovakim\}@illinois.edu).}
\thanks{Evangelos A. Theodorou is with the Department of Aerospace Engineering, Georgia Institute of Technology, Atlanta, GA 30332 USA (e-mail: \{evangelos.theodorou\}@gatech.edu).}
}

\maketitle

\begin{abstract}
We propose a unified framework to fast generate a safe optimal control action for a new task from existing controllers on Multi-Agent Systems (MASs). The control action composition is achieved by taking a weighted mixture of the existing controllers according to the contribution of each component task. Instead of sophisticatedly tuning the cost parameters and other hyper-parameters for safe and reliable behavior in the optimal control framework,  the safety of each single task solution is guaranteed using the control barrier functions (CBFs) for high-degree stochastic systems, which constrains the system state within a known safe operation region where it originates from. Linearity of CBF constraints in control enables the control action composition. The discussed framework can immediately provide reliable solutions to new tasks by taking a weighted mixture of solved component-task actions and filtering on some CBF constraints, instead of performing an extensive sampling to achieve a new controller. Our results are verified and demonstrated on both a single UAV and two cooperative UAV teams in an environment with obstacles.
\end{abstract}

\begin{IEEEkeywords}
Stochastic Optimal Control, Safe Control, Multi-Agent Systems, Control Barrier Functions
\end{IEEEkeywords}

\section{Introduction}
\label{sec:introduction}
Optimal  control design by minimizing a cost function or maximizing a reward function has been investigated for various systems \cite{kirk2004optimal,sutton2018reinforcement}. Specifically, stochastic optimal control problems consider minimizing a cost function for dynamical systems subject to random noise \cite{kumar2015stochastic,kappen2005linear}. Although such formulation captures a wide class of real-world problems, stochastic optimal control actions are typically difficult and expensive to compute in large-scale systems due to the curse of dimensionality \cite{blondel2000survey}. To overcome the computation challenges, many approximation-based approaches, including cost parameterization\cite{bertsekas1995neuro}, path-integral formulation \cite{theodorou2010generalized,theodorou2010reinforcement}, value function approximation \cite{powell2011review} and policy approximation \cite{sutton2000policy}, have been proposed. Exponential transformation on the value function was applied in \cite{todorov2009efficient} 
such that linear-form solutions to stochastic control problems were achieved, and thus the computational efficiency was improved. Optimal control problems whose solutions can be obtained by solving reduced linear equations are generally categorized as linearly-solvable optimal control (LSOC) problems in \cite{dvijotham2012linearly}. These formulations leverage the benefits of LSOC problems, including compositionality and  path-integral representation of the optimal solution \cite{dvijotham2011unified}. Path integral (PI) control approach usually requires extensive sampling on given dynamical models. However, the compositionality property enables the construction of composite control by taking a weighted mixture of existing component controllers to solve a new task in a certain class without sampling on the dynamical system again, and thus improves the computation efficiency \cite{pan2015sample,todorov2009compositionality}. The methodology of combining and reusing existing controllers has been explored and validated on different systems, from single-agent systems (e.g. 3-dimensional robotic arm \cite{pan2015sample} and physically-based character animation \cite{muico2011composite}) to networked multi-agent systems (e.g. a cooperative UAV team \cite{song2021compositionality}).  

Networked multi-agent systems enable coordination between agents and have a wide range of real-world applications, including cooperative vehicles \cite{mahony2012multirotor,cichella2016safe}, robotics \cite{wilson2020robotarium}, sensor networks \cite{zheng2018average}, and transportation \cite{liu2017intelligent}. The decentralized POMDP (Dec-POMDP) model in \cite{bernstein2002complexity}, formulates the control problem in multi-agent systems under uncertainty with only local state information available. A dynamic programming (DP)-based approach has been investigated on the Dec-POMDP model in \cite{hansen2004dynamic} towards the search for the optimal solution. However, the computation required for solving such systems grows exponentially as the system state dimension scales. To tackle the difficulties in computation, approximation-based approaches and distributed algorithms based on certain network partition have been discussed. In \cite{pajarinen2011periodic}, the expectation-maximization (EM) algorithm was introduced on the Dec-POMDP model, where optimal policies were represented as finite-state controllers to improve the scalability. Q-value functions are approximated and efficient computation of optimal policies is achieved in \cite{oliehoek2008optimal}.  Furthermore, fully-decentralized reinforcement learning algorithms were investigated on networked MASs using function approximation in \cite{zhang2018fully}. Compared to aforementioned approaches, path integral (PI) formalism of stochastic optimal control problem solutions relies on sampling of the stochastic differential equation, and is applicable to large-scale nonlinear systems. Path integral formulation has demonstrated higher efficiency and robustness in solving high-dimensional reinforcement learning problems in \cite{theodorou2010generalized}. Path integral control approach has also been extended to MASs, and an approximate inference method was applied to approximate the exact solution in \cite{van2008graphical}. A distributed algorithm proposed in \cite{wan2021distributed} partitions the networked MAS into multiple subsystems, from which local control actions are computed with limited resources.

However, the optimal solution to a cost-minimization problem is not always reliable and applicable in the real world, especially in safety-critical scenarios. Guaranteed-safe control actions have attracted researchers' interest, both in  controller design \cite{aswani2013provably,holicki2020controller}, and in  verification\cite{tabuada2009verification, mitra2013verifying}.  Control barrier functions (CBFs) have been proposed as a powerful tool to enforce  safe behavior of system states by introducing extra constraints on the control inputs \cite{ames2016control}.  Control barrier functions  are required to be finite (for Reciprocal CBF) or positive (for Zero CBF) as the system operates within the known safe region. CBF can be constructed empirically \cite{ames2019control} or learned from data \cite{srinivasan2020synthesis,robey2020learning}. Along with optimization-based control actions or reinforcement learning (RL) techniques, the CBF-based approach to ensure safety has achieved satisfying results in various application scenarios, including bipedal locomotion under model uncertainty \cite{choi2020reinforcement}, autonomous vehicles \cite{ames2014control}, and UAVs \cite{chen2020guaranteed}. Recently, CBF techniques were generalized to stochastic systems with high-probability guarantees, in cases of both complete and incomplete information in \cite{clark2021control} and \cite{clark2019control}. To achieve the state-trajectory tracking goal and associated safe certificates on nonlinear and complicated systems, contraction-based methods are applicable  on both nominal dynamics \cite{manchester2017control} and uncertain dynamics \cite{gahlawat2021contraction,lakshmanan2020safe}. 

\begin{figure}[!t]
\centerline{\includegraphics[width=\columnwidth]{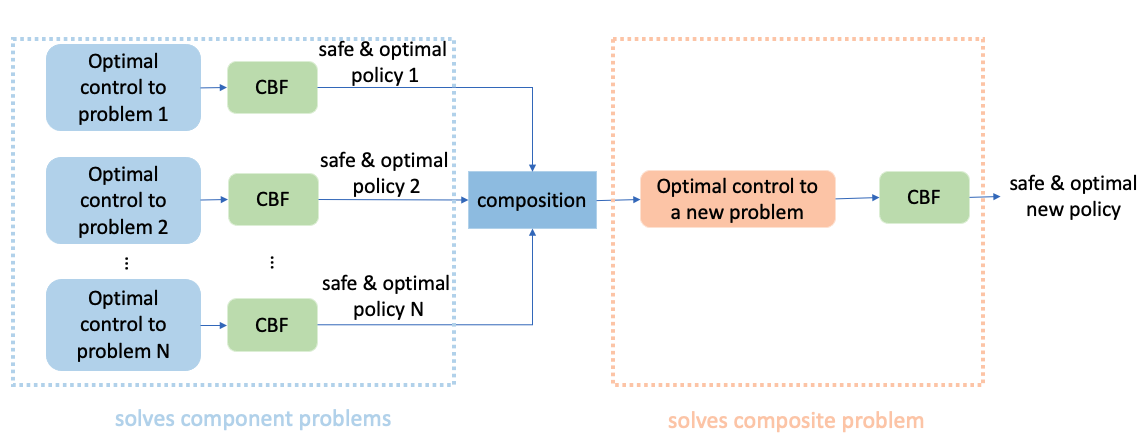}}
\caption{The architecture of the proposed safe composition control framework.}
\label{fig_structure}
\end{figure}
However, efficient computation of certified-safe stochastic optimal control solutions on MASs is still an open problem. The main contribution of this paper is a framework of generalizing optimal control actions while ensuring safety on networked MASs; the architecture is illustrated in Fig. ~\ref{fig_structure}. When multiple solved control problems on MASs share identical dynamical information, but are slightly different in some aspects, such as terminal states and final costs, instead of sampling on the given dynamical system again to solve a new task, the compositionality of achieved control actions can be leveraged, and the existing controllers of these solved problems can be weighted and mixed to solve a new problem and drive the system to a new target.  The baseline optimal control for the new task is first obtained by mixing existing certified-safe controllers.  Then, a post-composite constrained optimization using CBFs is formulated to filter the baseline control and thus guarantee the safety. 
The task generalization capability of resulting control actions allows direct and efficient computation of a new problem solution without re-sampling in a certain class. Compared with our preliminary work \cite{song2021compositionality}, we furthermore incorporate safety constraints on the compositionality of LSOC problem solutions. The proposed strategy is validated via numerical simulations on both single UAV and two cooperative UAV teams. 

The rest of the paper is organized as follows: Section \ref{sec2} introduces the preliminaries of formulating stochastic control problems, linearly-solvable optimal control (LSOC) problems, and stochastic CBFs; Section \ref{sec3} introduces the compositionality of LSOC problem solutions on networked MASs; Section \ref{sec4} introduces the achieved certified-safe optimal control actions on MASs and the generalization of proposed control actions with safety guarantees; Section \ref{sec5} provides numerical simulations in three scenarios validating the proposed approach; the conclusion of this paper and some open problems are discussed in Section \ref{sec6}.

\section{Preliminaries and Problem Formulation}\label{sec2}

We start by introducing some preliminary results of optimal control actions in stochastic systems, including both the single-agent and multi-agent scenarios. Then an extension of control barrier functions (CBFs) to stochastic systems is presented, which enables the proposed safe and optimal control framework introduced in Section \ref{sec4}.

\subsection{Stochastic optimal control problems}\label{pf:soc-problems}
\subsubsection{Single-Agent Systems}
Consider a continuous-time dynamical system described by the It$\hat{\textrm{o}}$ diffusion process: \begin{align}
    dx^t &= g(x^t)dt+B(x^t)[u(x^t,t)dt+\sigma d\omega] \label{eq:cont_dyn}\\
    &= f(x^t,u^t)dt+F(x^t,u^t)d\omega,
\nonumber
\end{align}
where time-varying $x^t \in \mathbb{R}^M$ is the state vector with $M$ denoting the state dimension, $g(x^t) \in \mathbb{R}^{M}$, $B(x^t) \in \mathbb{R}^{M \times P}$,  $u(x^t,t) \in \mathbb{R}^{P}$ are the passive dynamics, control matrix and control input with $P$ denoting the input dimension, and $\omega \in \mathbb{R}^P$ denotes the Brownian noise with covariance $\sigma \in \mathbb{R}^{P \times P}$.

Let $\mathcal{I}$ denote the set of interior states and $\mathcal{B}$ denote the set of boundary states. When $x^t \in \mathcal{I}$, the running cost function is defined as:\begin{equation}\label{eq:sas-running-cost}
    c(x^t,u^t) = q(x^t) + \frac{1}{2}u(x^t,t)^\top Ru(x^t,t),
\end{equation}
where $q(x^t) \ge 0$ is a state-related cost and $u(x^t,t)^\top Ru(x^t,t)$ is a control-quadratic term with $R$ being a positive definite matrix. When $x^{t_f} \in \mathcal{B}$, the terminal cost function is denoted by $\phi(x^{t_f})$, where $t_f$ is the exit time and determined online in the first-exit problem. The cost-to-go function $J^{u}(x^t,t)$ for the first-exit problem under control action $u$ can be defined as:\begin{equation}\label{eq:sas-cost-to-go}
    J^u(x^t,t) = \mathbb{E}_{x^t,t}^{u}[\phi(x^{t_f})+\int_t^{t_f}c(x(\tau),u(\tau))d\tau].
\end{equation}
The value function $V(x^t,t)$ is defined as the optimal cost-to-go function:\begin{equation}\label{eq:sas-value-func}
    V(x^t,t) = \min_u \mathbb{E}_{x^t,t}^{u}[\phi(x^{t_f})+\int_t^{t_f}c(x(\tau),u(\tau))d\tau].
\end{equation}
The value function is the expected cumulative running cost starting from the state $x^t$ and acting optimally thereafter. For notation simplicity, the time-evolution of the state is omitted.

Define a stochastic second-order differentiator as $\mathcal{L}_{(u)}[V] = f^\top \nabla_{x}V+\frac{1}{2}\textrm{tr}(FF^\top\nabla_{xx}^2V)$, and then the value function $V$ satisfies the stochastic Hamilton-Jacobi-Bellman (HJB) equation as follows:\begin{equation}\label{eq:sas-shjb}
    0 = \min_u\{c(x,u)+\mathcal{L}_{(u)}[V](x)\}.
\end{equation}
With the desirability function $Z(x,t) = [\exp(-V(x,t))/\lambda]$ and under the nonlinearity cancellation condition $\sigma\sigma^\top = \lambda R^{-1}$, the linear-form optimal control action for continuous-time stochastic systems  is obtained as \begin{equation}\label{eq:sas-optimal-control-pf}
    u^*(x,t) = \sigma \sigma^\top B^\top(x) \frac{\nabla_x Z(x,t)}{Z(x,t)},
\end{equation}
and the corresponding transformed linear HJB equation takes the form of $0 = \mathcal{L}[Z]-qZ$, where $\mathcal{L}[Z] = f^\top \nabla_{x}Z+\frac{1}{2}\textrm{tr}(FF^\top\nabla_{xx}^2Z)$.

\subsubsection{Multi-Agent Systems}\label{pf:mas-setup} For a networked multi-agent system governed by mutually independent passive dynamics, the index set of all the agents neighboring or adjacent to agent $i$ is denoted by $\mathcal{N}_i$. The factorial subsystem for agent $i$ includes all the agents directly communicating with agent $i$ and the agent itself, and is denoted by $\bar{\mathcal{N}}_i := \mathcal{N}_i \cup \{i\}$, and the cardinality of set $\bar{\mathcal{N}}_i$ is denoted by $|\bar{\mathcal{N}}_i|$. Consider the joint continuous-time dynamics  for factorial subsystem  $\bar{\mathcal{N}}_i$ as in \cite{wan2021distributed}:
\begin{equation}\label{eq:cont_dyn_mas}
    d\bar{x}_i = \bar{g}_i(\bar{x}_i)dt+\bar{B}_i(\bar{x}_i)[\bar{u}_i(\bar{x}_i,t)dt+\bar{\sigma}_i d\bar{\omega}_i], 
\end{equation}
where the joint state vector is denoted by $\bar{x}_i = [x_i^\top,x_{j\in \mathcal{N}_i}^\top]^\top \in \mathbb{R}^{M \cdot |\bar{\mathcal{N}}_i|}$, the joint passive dynamics vector is denoted by $\bar{g}_i(\bar{x}_i) = [g_i(x_i)^\top,g_{j\in \mathcal{N}_i}(x_j)^\top]^\top \in \mathbb{R}^{M\cdot|\bar{\mathcal{N}}_i|}$, the joint control matrix is denoted by $\bar{B}_i(\bar{x}_i) = [B_i(x_i) \hspace{1mm}  \mathbf{0};\mathbf{0} \hspace{1mm} B_{j \in \mathcal{N}_i}(x_j)] \in \mathbb{R}^{M \cdot |\bar{\mathcal{N}}_i|\times P \cdot |\bar{\mathcal{N}}_i|}$, the joint control action vector is $\bar{u}_i(\bar{x}_i,t) = [u_i(\bar{x}_i,t)^\top,u_{j\in \mathcal{N}_i}(\bar{x}_i,t)^\top]^\top \in \mathbb{R}^{P\cdot |\bar{\mathcal{N}}_i|}$, and $\bar{\omega}_i = [\omega_i^\top,\omega_{j \in \mathcal{N}_i}^\top]^\top \in \mathbb{R}^{P \cdot |\bar{\mathcal{N}}_i|}$ is the joint noise vector with covariance matrix $\bar{\sigma}_i = \textrm{diag}\{\sigma_i,\sigma_{j\in \mathcal{N}_i}\} \in \mathbb{R}^{P \cdot |\bar{\mathcal{N}}_i| \times P \cdot |\bar{\mathcal{N}}_i|}$.

Let $\bar{\mathcal{I}}_i$ denote the set of interior joint states, and $\bar{\mathcal{B}}_i$ denote the set of boundary joint states. When $\bar{x}_i \in \bar{\mathcal{I}}_i$, the running cost function of the joint states of $\bar{\mathcal{N}}_i$ is defined as:\begin{equation}\label{eq:mas-running-cost}
    c_i(\bar{x}_i,\bar{u}_i) = q_i(\bar{x}_i) + \frac{1}{2}\bar{u}_i(\bar{x}_i,t)^\top \bar{R}_i\bar{u}_i(\bar{x}_i,t),
\end{equation}
where $q_i(\bar{x}_i) \ge 0$ is a joint-state-related cost, and $\bar{u}_i(\bar{x}_i,t)^\top \bar{R}_i\bar{u}_i(\bar{x}_i,t)$ is a control-quadratic term with $\bar{R}_i \in \mathbb{R}^{P \cdot |\bar{\mathcal{N}}_i| \times P \cdot |\bar{\mathcal{N}}_i|}$ being a positive definite matrix. When $\bar{x}_i^{t_f} \in \bar{\mathcal{B}}_i$, the terminal cost function is denoted by $\phi_i(\bar{x}_i^{t_f})$, where $t_f$ is the exit time and determined online in the first-exit problem. Specifically, when the central agent state $x_i^{t_f} \in \mathcal{B}_i$, we also have the terminal cost function $\phi_i(x_i^{t_f})$ defined. As an extension of the single-agent case, the cost-to-go function $J_i^{\bar{u}_i}(\bar{x}_i^t,t)$ for the first-exit problem under joint control action $\bar{u}_i$ can be defined as:\begin{equation}\label{eq:mas-cost-to-go}
    J_i^{\bar{u}_i}(\bar{x}_i^t,t) = \mathbb{E}_{\bar{x}_i^t,t}^{\bar{u}_i}[\phi_i(\bar{x}_i^{t_f})+\int_t^{t_f}c_i(\bar{x}_i(\tau),\bar{u}_i(\tau))d\tau].
\end{equation}
The value function $V_i(\bar{x}_i,t)$ is defined as the optimal cost-to-go function:\begin{equation}\label{eq:mas-value-func}
    V_i(\bar{x}_i,t) = \min_{\bar{u}_i} \mathbb{E}_{\bar{x}_i^t,t}^{\bar{u}_i}[\phi_i(\bar{x}_i^{t_f})+\int_t^{t_f}c_i(\bar{x}_i(\tau),\bar{u}_i(\tau))d\tau],
\end{equation}
and the value function is the expected cumulative running cost starting from joint state $\bar{x}_i$ and acting optimally thereafter. 

Similar to the single-agent scenario, the desirability function over the joint state $\bar{x}_i$ is defined as \begin{equation}\label{eq:mas-desir}
    Z(\bar{x}_i,t) = \exp[-V_i(\bar{x}_i,t)/\lambda_i]
\end{equation}   under the condition that $\bar{\sigma}_i\bar{\sigma}_i^\top = \lambda_i\bar{R}_i^{-1}$ is satisfied to cancel the nonlinear terms. The linear-form joint optimal control action for continuous-time stochastic networked MAS under aforementioned decentralization topology is derived in \cite{wan2021distributed}  and the result is in the form:\begin{equation}\label{eq:mas-optimal-control-pf}
    \bar{u}_i^*(\bar{x}_i,t) = \bar{\sigma}_i\bar{\sigma}_i^\top \bar{B}_i(\bar{x}_i)^\top\frac{\nabla_{\bar{x}_i} Z(\bar{x}_i,t)}{Z(\bar{x}_i,t)}. 
\end{equation}
\begin{remark}
Only the central agent $i$ of the factorial subsystem $\bar{\mathcal{N}}_i$ samples or selects the local optimal control action $u_i^*(\bar{x}_i,t)$ from the joint optimal control action $\bar{u}^*_i(\bar{x}_i,t)$ according to the computation results on factorial subsystem $\bar{\mathcal{N}}_i$. 
\end{remark}

\subsection{Stochastic control barrier function (SCBF) as a safety filter}
Optimization-based control actions achieve the goal of cost minimization, but seldom provide rigorous guarantees on satisfying the commanded constraints, which can be problematic in some safety-critical scenarios. Control barrier functions (CBFs) have been widely used to describe certain desired  known safe operating region and can guarantee the state-invariant property within the proposed set by filtering existing control actions. Recently, CBFs have also been investigated in stochastic systems in \cite{clark2021control}.  
\begin{definition}[zero-CBF for stochastic systems]
The function $h(x)$ serves as a zero-CBF (ZCBF) for a system described by the stochastic differential equation (SDE) \eqref{eq:cont_dyn}, if for all $x$ satisfying $h(x) > 0$ there exists a $u$ satisfying \begin{equation}\label{eq:zcbf-def}
    \frac{\partial{h}}{\partial x}(g(x)+B(x)u)+\frac{1}{2}\textrm{tr}(\sigma^\top B^\top(x)(\frac{\partial^2 h}{\partial x^2})  B(x)\sigma) \ge -h(x).
\end{equation}
\end{definition}

We next introduce a lemma and a theorem modified from \cite{clark2021control} constructing ZCBFs for high-degree stochastic systems, which our proposed safe control framework is built upon.
\begin{lemma}[State-invariant guarantees by definition of the ZCBF]\label{lemma:state-invariant-zcbf}
Define the safe operating region $\mathcal{C}=\{x:h(x) \ge 0\}$. If $u^t$ satisfies \eqref{eq:zcbf-def} for all $t$, then $P_r(x^t \in \mathcal{C} \hspace{2mm} \forall t)=1$, provided $x^0 \in \mathcal{C}$.
\end{lemma}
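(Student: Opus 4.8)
The plan is to recognize the left-hand side of \eqref{eq:zcbf-def} as the infinitesimal generator of the controlled diffusion \eqref{eq:cont_dyn} acting on the barrier $h$. Writing $\mathcal{A}h := \frac{\partial h}{\partial x}(g+Bu)+\frac{1}{2}\mathrm{tr}(\sigma^\top B^\top \frac{\partial^2 h}{\partial x^2}B\sigma)$, which by cyclicity of the trace is exactly the stochastic second-order differentiator $\mathcal{L}_{(u)}$ introduced above applied to $h$ (with $f=g+Bu$ and $F=B\sigma$), the hypothesis becomes the pointwise inequality $\mathcal{A}h(x^t)\ge -h(x^t)$ on the set $\{h\ge 0\}$. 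First I would apply It\^o's formula to $h(x^t)$ along \eqref{eq:cont_dyn} to obtain $dh(x^t)=\mathcal{A}h(x^t)\,dt+\frac{\partial h}{\partial x}B(x^t)\sigma\,d\omega$, so that the drift of $h$ is bounded below by $-h$ itself.

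To absorb the $-h$ term I would introduce the auxiliary process $Y^t:=e^{t}h(x^t)$ and compute, again by It\^o, $dY^t=e^{t}\big(\mathcal{A}h(x^t)+h(x^t)\big)\,dt+e^{t}\frac{\partial h}{\partial x}B(x^t)\sigma\,d\omega$. On $\{h\ge 0\}$ the ZCBF inequality forces the drift $e^{t}(\mathcal{A}h+h)$ to be nonnegative, so $Y^t$ is a (local) submartingale up to the first exit time $\tau:=\inf\{t\ge 0:x^t\notin\mathcal{C}\}$. By sample-path continuity of the diffusion, $h(x^\tau)=0$ on $\{\tau<\infty\}$, and applying Dynkin's formula to the stopped process $Y^{t\wedge\tau}$ yields $\mathbb{E}[e^{t\wedge\tau}h(x^{t\wedge\tau})]\ge h(x^0)\ge 0$, confirming that the expected barrier value is pushed inward at the boundary.

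The hard part, and the step I expect to be the main obstacle, is upgrading this generator/expectation-level statement to the almost-sure pathwise invariance $P_r(x^t\in\mathcal{C}\ \forall t)=1$. The submartingale property alone does not prevent the continuous paths from dipping below zero, since the diffusion coefficient $\frac{\partial h}{\partial x}B\sigma$ need not vanish as $h\to 0$; near the boundary $h(x^t)$ behaves like a drifted Brownian motion, which crosses any level. I would therefore attempt a pathwise comparison: construct the companion process $p^t$ driven by the same Brownian increments with drift $-p$ and diffusion $\frac{\partial h}{\partial x}B\sigma$, so that the noise cancels in $d(h-p)=(\mathcal{A}h+p)\,dt\ge-(h-p)\,dt$, whence $e^{t}(h-p)$ is nondecreasing from $0$ and $h(x^t)\ge p^t$ for all $t$. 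The conclusion then reduces to $p^t\ge 0$ almost surely, and this is precisely where the argument is delicate: it holds trivially for the deterministic comparison ODE $\dot p=-p$, but in the stochastic case the stochastic integral in $p^t=e^{-t}[h(x^0)+\int_0^t e^{s}\frac{\partial h}{\partial x}B\sigma\,d\omega]$ can be negative unless the noise is tangential to $\partial\mathcal{C}$, i.e. $\frac{\partial h}{\partial x}B\sigma\to 0$ at the boundary. Absent such a structural assumption the claim should be read as a high-probability guarantee, so I would expect the proof either to invoke this vanishing-diffusion condition implicitly or to supply a supermartingale tail estimate of the form $P_r(\inf_{t}h(x^t)<0)\le\epsilon$ in place of the stated $P_r=1$.
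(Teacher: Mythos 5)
The paper does not actually prove this lemma: it is imported verbatim (``modified from \cite{clark2021control}'') and used as a black box in the proof of Theorem~\ref{lemma:zcbf-high-deg}, so there is no in-paper argument to compare yours against. Judged on its own terms, your setup is the right one --- identifying the left side of \eqref{eq:zcbf-def} as the generator applied to $h$, applying It\^o's formula, and passing to $Y^t=e^{t}h(x^t)$ to absorb the $-h$ term into a nonnegative drift is exactly the standard route, and your stopped-process/Dynkin step is correct as far as it goes.

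However, your proposal does not prove the stated conclusion, and you say so yourself: everything you establish is at the level of expectations (a local submartingale inequality, $\mathbb{E}[e^{t\wedge\tau}h(x^{t\wedge\tau})]\ge h(x^0)$), whereas the lemma asserts almost-sure pathwise invariance. Your attempted repair via the comparison process $p^t$ is where the argument genuinely breaks: although the noise cancels in $d(h-p)$ and gives $h(x^t)\ge p^t$, the process $p^t=e^{-t}\bigl[h(x^0)+\int_0^t e^{s}\tfrac{\partial h}{\partial x}B\sigma\,d\omega\bigr]$ contains an unbounded-below stochastic integral and is \emph{not} nonnegative almost surely unless $\tfrac{\partial h}{\partial x}B\sigma$ vanishes at $\partial\mathcal{C}$. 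A further wrinkle you should make explicit is that \eqref{eq:zcbf-def} is only assumed to hold where $h(x)>0$, so all inequalities must be stopped at the first exit time $\tau$; near that time $h(x^t)$ is, to leading order, a Brownian motion with bounded drift and therefore crosses zero with positive probability when the normal diffusion is nondegenerate. Your diagnosis --- that the claim needs either a tangential-noise/vanishing-diffusion condition at the boundary or must be weakened to a high-probability (supermartingale tail) bound of the form $P_r(\inf_t h(x^t)<0)\le\epsilon$ --- is mathematically sound and reflects a known difficulty with the cited result; but as written the proposal leaves the lemma unproved rather than supplying the missing ingredient.
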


\begin{theorem}[Construction of ZCBF  for  high-degree stochastic systems]
\label{lemma:zcbf-high-deg}
Define the ZCBFs for high-degree stochastic systems \eqref{eq:cont_dyn} as $h_0(x) = h(x)$, and subsequently \begin{align}\label{eq:construct-zcbf-high-deg}
    h_{i+1}(x) &= \frac{\partial{h_i}}{\partial x}g(x)\nonumber\\ &+ \frac{1}{2}\textrm{tr}(\sigma^\top B^\top(x)(\frac{\partial^2 h_i}{\partial x^2})  B(x)\sigma)+h_i(x),
\end{align}
$i=0,1,2,\dots$, and define $\bar{\mathcal{C}}_r = \bigcap_{i=0}^r \mathcal{C}_i$ with $\mathcal{C}_i = \{x: h_i(x) \ge 0\}$ and $\mathcal{C} = \{x: h(x) \ge 0\}$. Suppose that there exists $r$ such that, for any $x \in \bar{\mathcal{C}}_r$, we have $\frac{\partial{h_i}}{\partial x}B(x)u \ge 0$ for $i<r$ and \begin{equation}\label{eq:zcbf-const-thm1}
     \frac{\partial{h}_r}{\partial x}(g(x)+B(x)u)+\frac{1}{2}\textrm{tr}(\sigma^\top B^\top(x)(\frac{\partial^2 h_r}{\partial x^2})  B(x)\sigma) \ge -h_r(x).
\end{equation}
Then $P_r(x^t \in \mathcal{C} \hspace{2mm} \forall t) = 1$ if $x^0 \in \bar{\mathcal{C}}_r$.
\end{theorem}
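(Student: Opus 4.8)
The plan is to reduce the statement to Lemma~\ref{lemma:state-invariant-zcbf} by showing that, on the restricted region $\bar{\mathcal{C}}_r$, each of the functions $h_0,\dots,h_r$ simultaneously satisfies its own ZCBF inequality, and then to bootstrap these pointwise inequalities into forward invariance of $\bar{\mathcal{C}}_r$. The starting observation is purely algebraic: the recursion \eqref{eq:construct-zcbf-high-deg} is constructed so that the passive-dynamics drift plus the diffusion term of $h_i$ equals $h_{i+1}-h_i$. Writing the full generator of $h_i$ along \eqref{eq:cont_dyn} as
\begin{equation*}
\mathcal{L}_{(u)}[h_i] = \frac{\partial h_i}{\partial x}\bigl(g(x)+B(x)u\bigr)+\frac{1}{2}\textrm{tr}\Bigl(\sigma^\top B^\top(x)\tfrac{\partial^2 h_i}{\partial x^2}B(x)\sigma\Bigr),
\end{equation*}
the definition \eqref{eq:construct-zcbf-high-deg} gives $\mathcal{L}_{(u)}[h_i] = h_{i+1}(x)-h_i(x)+\frac{\partial h_i}{\partial x}B(x)u$. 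Hence the ZCBF inequality $\mathcal{L}_{(u)}[h_i]\ge -h_i(x)$ is \emph{equivalent} to $h_{i+1}(x)+\frac{\partial h_i}{\partial x}B(x)u\ge 0$.

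With this reformulation the hypotheses fall into place. For $i=r$, condition \eqref{eq:zcbf-const-thm1} is exactly $\mathcal{L}_{(u)}[h_r]\ge -h_r(x)$. For each $i<r$, on $\bar{\mathcal{C}}_r$ we have $h_{i+1}(x)\ge 0$ (because $i+1\le r$ and $\bar{\mathcal{C}}_r\subseteq\mathcal{C}_{i+1}$), while the assumed sign condition supplies $\frac{\partial h_i}{\partial x}B(x)u\ge 0$; adding these yields $h_{i+1}(x)+\frac{\partial h_i}{\partial x}B(x)u\ge 0$, i.e. the ZCBF inequality for $h_i$. Thus every $h_i$, $i=0,\dots,r$, obeys its ZCBF inequality at each point of $\bar{\mathcal{C}}_r$.

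It then remains to promote these pointwise inequalities into a statement about the trajectory, which is the delicate part, since they are known to hold only on $\bar{\mathcal{C}}_r$ and not globally. I would introduce the first exit time $\tau=\inf\{t:\,x^t\notin\bar{\mathcal{C}}_r\}$ and work with the stopped process $x^{t\wedge\tau}$. On $\{t<\tau\}$ the state lies in $\bar{\mathcal{C}}_r$, so the chosen control makes all $r+1$ ZCBF inequalities hold along the path; applying the comparison/supermartingale argument underlying Lemma~\ref{lemma:state-invariant-zcbf} to each $h_i(x^{t\wedge\tau})$ shows that every $h_i$ stays nonnegative up to $\tau$. Consequently $x^{t\wedge\tau}\in\bar{\mathcal{C}}_r$ almost surely, which forces $\tau=\infty$ a.s., so the trajectory can never leave $\bar{\mathcal{C}}_r$. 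Since $\bar{\mathcal{C}}_r=\bigcap_{i=0}^r\mathcal{C}_i\subseteq\mathcal{C}_0=\mathcal{C}$, forward invariance of $\bar{\mathcal{C}}_r$ yields $P_r(x^t\in\mathcal{C}\ \forall t)=1$ whenever $x^0\in\bar{\mathcal{C}}_r$.

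The main obstacle is exactly this last bootstrap: guaranteeing that the constraints, valid only on the intersection $\bar{\mathcal{C}}_r$, are never violated requires handling all $r+1$ barriers simultaneously through the stopping-time argument, and one must verify that the integrability and regularity conditions implicit in Lemma~\ref{lemma:state-invariant-zcbf} survive stopping at $\tau$, so that the nonnegativity of each $h_i(x^{t\wedge\tau})$ genuinely transfers to the unstopped process in the limit.
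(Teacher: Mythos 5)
Your proof is correct and rests on the same algebraic engine as the paper's: the recursion \eqref{eq:construct-zcbf-high-deg} converts the ZCBF inequality for $h_i$ into the condition $h_{i+1}(x)+\frac{\partial h_i}{\partial x}B(x)u\ge 0$, which the assumed sign condition and the nonnegativity of $h_{i+1}$ together supply. Where you diverge is in the organization and in one point of rigor. The paper runs a backward induction on the barrier index: it first invokes Lemma~\ref{lemma:state-invariant-zcbf} to get $h_r(x^t)\ge 0$ along the trajectory, then adds $\frac{\partial h_{r-1}}{\partial x}B(x)u\ge 0$ to that trajectory-wise inequality to recover the ZCBF inequality for $h_{r-1}$, and descends level by level to $h_0=h$, calling Lemma~\ref{lemma:state-invariant-zcbf} once per level. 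You instead establish all $r+1$ ZCBF inequalities simultaneously as \emph{pointwise} statements on $\bar{\mathcal{C}}_r$, using only the inclusion $\bar{\mathcal{C}}_r\subseteq\mathcal{C}_{i+1}$ for $i<r$, and then perform a single localization through the exit time $\tau$ of $\bar{\mathcal{C}}_r$. This buys you something the paper's argument does not explicitly deliver: the hypotheses of the theorem hold only for $x\in\bar{\mathcal{C}}_r$, so the paper's first appeal to Lemma~\ref{lemma:state-invariant-zcbf} (which needs \eqref{eq:zcbf-const-thm1} to hold for all $t$ along the trajectory) is circular unless one already knows the state never leaves $\bar{\mathcal{C}}_r$; your stopped-process argument is the standard and correct way to break that circularity. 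The price, as you note yourself, is that you must check that the supermartingale estimate behind Lemma~\ref{lemma:state-invariant-zcbf} survives stopping at $\tau$, but that is routine by optional stopping. In short: same key identity, but your version repairs a localization gap that the paper's proof leaves implicit.
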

% \textcolor{red}{This original proof comes from \cite{b140}, however, it is interesting to reformulate and mention again here with more details provided since this is a useful theorem?}

\begin{proof}
Provided $u^t$ satisfies \eqref{eq:zcbf-const-thm1}, by Lemma \ref{lemma:state-invariant-zcbf}, we have $x^t \in {\mathcal{C}}_r \hspace{2mm} \forall t$ if $x^0 \in \mathcal{C}_r$, which is equivalent to $h_r(x^t) \ge 0 \hspace{2mm} \forall t$ if $x^0 \in \mathcal{C}_r$. By the construction of the high-degree ZCBFs as in \eqref{eq:construct-zcbf-high-deg}, we have \begin{equation}\label{eq:thm1_proof_zcbf_constru}
    h_r(x) = \frac{\partial{h_{r-1}}}{\partial x}g(x) + \frac{1}{2}\textrm{tr}(\sigma^\top B^\top(\frac{\partial^2 h_{r-1}}{\partial x^2}) B\sigma)+h_{r-1}(x) \ge 0.
\end{equation}
Since $\frac{\partial{h_{r-1}}}{\partial x}B(x)u \ge 0$, once added to both sides of \eqref{eq:thm1_proof_zcbf_constru}, we have \begin{align*}
    h_r(x)+ \frac{\partial{h_{r-1}}}{\partial x}B(x)u &= \frac{\partial{h_{r-1}}}{\partial x}(g(x)+B(x)u)\\&+\frac{1}{2}\textrm{tr}(\sigma^\top B^\top(\frac{\partial^2 h_{r-1}}{\partial x^2}) B\sigma)\\
    &+h_{r-1}(x) \ge 0,
\end{align*}
which satisfies \eqref{eq:zcbf-def},  and thus we have $x^t \in {\mathcal{C}}_{r-1} \hspace{2mm} \forall t$ if $x^0 \in \mathcal{C}_{r-1}$. By induction, we have \begin{equation*}
    x^t \in \mathcal{C}_i \hspace{2mm} \forall t \hspace{2mm}\textrm{if}\hspace{2mm} x^0 \in \mathcal{C}_i, i=0,1,2,\dots,r,
\end{equation*}
which is equivalent to $x^t \in \mathcal{C}_0 = \mathcal{C} \hspace{2mm} \forall t$, if $x^0 \in \bigcap_{i=0}^r \mathcal{C}_i = \bar{\mathcal{C}}_r$.
\end{proof}

\section{Control Compositionality}\label{sec3}
Considering the optimal control action takes a linear form  (\eqref{eq:sas-optimal-control-pf} in the single-agent scenario and \eqref{eq:mas-optimal-control-pf} in the multi-agent scenario), assume the component problems and the composite problem share certain dynamical information and satisfy certain conditions on the final cost; then a composite control action can be obtained analytically by combining existing component control actions.

Assume there are $F$ problems in MASs to solve, governed by identical dynamics \eqref{eq:cont_dyn_mas}, running cost rates \eqref{eq:mas-running-cost}, a set of interior joint states $\bar{\mathcal{I}}_i$ and boundary joint states $\bar{\mathcal{B}}_i$, but differ at the final costs. For factorial subsystem $\bar{\mathcal{N}}_i$, denote the final cost of the component problem $f$ on joint states $\bar{x}_i$ as $\phi_i^{\{f\}}(\bar{x}_i^{t_f})$, and the corresponding desirability function as $Z^{\{f\}}(\bar{x}_i^{t_f},t_f)$. Suppose the composite final cost $\phi_i(\bar{x}_i^{t_f})$ satisfies
  \begin{equation}\label{eq:mas-pf-final-cost-comp-condition}
      \phi_i(\bar{x}_i^{t_f}) = -\log(\sum_{f=1}^{F} \omega_i^{\{f\}} \exp(-\phi_i^{\{f\}}(\bar{x}_i^{t_f}))),
  \end{equation}
for some set of weights $\omega_i^{\{f\}}$. Then the composite desirability function can be computed on the boundary states $\bar{\mathcal{B}}_i$ as:\begin{equation}\label{eq:mas-pf-desirability-comp-condition}
    Z(\bar{x}_i^{t_f},t_f) = \sum_{f=1}^F \omega_i^{\{f\}}Z^{\{f\}}(\bar{x}_i^{t_f},t_f).
\end{equation}
Since the desirability function solves a linear HJB equation after the exponential  transformation \eqref{eq:mas-desir}, once the composite condition on the desirability function \eqref{eq:mas-pf-desirability-comp-condition} holds on the boundary, it holds everywhere, and the compositionality of optimal control actions can be extended to a task-generalization setting, i.e., the component problems and composite problem furthermore have different terminal states. The result of control generalization is formulated in the next theorem. Such formulation is especially useful when the component control actions are analytically solvable but costly; then the control action solving the new task can be constructed by composition in a sample-free manner and is less computationally expensive. 

\begin{theorem}
[Continuous-time MAS compositionality]\label{theorem_cont} Suppose there are $F$ multi-agent LSOC problems in continuous-time on factorial subsystem $\bar{\mathcal{N}}_i$ with joint states $\bar{x}_i$ and central agent state $x_i$, governed by the same joint dynamics \eqref{eq:cont_dyn_mas}, running cost rates \eqref{eq:mas-running-cost}, and the set of interior joint states $\overline{\mathcal{I}}_i$, but various terminal costs $\phi_i^{\{f\}}$ and terminal joint states $\bar{x}_i^{d^{\{f\}}}$.  Define the terminal joint state for a new problem as $\bar{x}_i^d$ and the composition weights as \begin{equation}\label{eq:continuous_weight}
    \bar{\omega}_i^{\{f\}} = \exp(-\frac{1}{2}(\bar{x}_i^{d}-\bar{x}_i^{d^{\{f\}}})^\top P(\bar{x}_i^{d}-\bar{x}_i^{d^{\{f\}}})),
\end{equation}
with $P$ being a positive definite diagonal matrix representing the kernel widths. Suppose the terminal cost for the new problem satisfies \begin{equation}\label{thm2_final_cost_composition}
    \phi_i(\bar{x}_{i}^{t_f}) = -\log(\sum\nolimits_{f=1}^F \tilde{\omega}_i^{\{f\}} \exp(-\phi_i^{\{f\}}(\bar{x}_{i}^{t_f}))),
\end{equation} where $\bar{x}_i^{t_f}$ denotes the boundary joint states and $\tilde{\omega}_i^{\{f\}} = \frac{\bar{\omega}_i^{\{f\}}}{\sum_{f=1}^{F}\bar{\omega}_i^{\{f\}}} $ can be interpreted as the probability weights. The optimal control law solving the new problem is obtained by a weighted combination of the existing controllers
\begin{equation}
    \bar{u}_i^{*}(\bar{x}_i, t) = \sum\nolimits_{f=1}^{F}\bar{W}_i^{\{f\}}(\bar{x}_i,t)\bar{u}_i^{*^{\{f\}}}(\bar{x}_i,t),
\end{equation}
with \begin{equation}
    \bar{W}_i^{\{f\}}(\bar{x}_i,t) = \frac{\tilde{\omega}_i^{\{f\}}Z^{\{f\}}(\bar{x}_i,t)}{\sum\nolimits_{e=1}^{F} \tilde{\omega}_i^{\{e\}}Z^{\{e\}}(\bar{x}_i,t)}.
\end{equation}
\end{theorem}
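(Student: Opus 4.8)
The plan is to reduce the statement about the composite optimal control law to a statement about the composite desirability function, and then to exploit the linearity of the transformed HJB equation. Because the optimal control \eqref{eq:mas-optimal-control-pf} enters only through the logarithmic gradient $\nabla_{\bar{x}_i}Z/Z$, it suffices to determine $Z$ for the new problem. The key structural fact is that the $F$ component problems and the new problem share the same joint dynamics \eqref{eq:cont_dyn_mas}, the same running cost \eqref{eq:mas-running-cost}, and the same interior set $\bar{\mathcal{I}}_i$; after the exponential transformation \eqref{eq:mas-desir}, each component desirability $Z^{\{f\}}$ and the sought $Z$ therefore solve one and the same linear, homogeneous equation $0=\mathcal{L}[Z]-q_iZ$ on $\bar{\mathcal{I}}_i$, and differ only through the boundary data prescribed on $\bar{\mathcal{B}}_i$.

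First I would evaluate the composite terminal-cost condition \eqref{thm2_final_cost_composition} on the boundary. Feeding it through the definition \eqref{eq:mas-desir} cancels the logarithm against the exponential and reproduces exactly the boundary identity \eqref{eq:mas-pf-desirability-comp-condition} established in Section \ref{sec3}, namely $Z(\bar{x}_i^{t_f},t_f)=\sum_{f=1}^F\tilde{\omega}_i^{\{f\}}Z^{\{f\}}(\bar{x}_i^{t_f},t_f)$, now carrying the normalized probability weights $\tilde{\omega}_i^{\{f\}}$ in place of the generic $\omega_i^{\{f\}}$. The main obstacle is to promote this boundary equality to an interior identity. Here I would invoke linearity: any fixed linear combination of solutions of the homogeneous equation $0=\mathcal{L}[Z]-q_iZ$ is again a solution, so $\sum_f\tilde{\omega}_i^{\{f\}}Z^{\{f\}}$ solves the same first-exit boundary value problem as the composite $Z$, and well-posedness (uniqueness) of that problem forces $Z(\bar{x}_i,t)=\sum_{f=1}^F\tilde{\omega}_i^{\{f\}}Z^{\{f\}}(\bar{x}_i,t)$ throughout $\bar{\mathcal{I}}_i$. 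This is precisely the assertion in Section \ref{sec3} that once the desirability composition holds on the boundary it holds everywhere. Note that the Gaussian-kernel form of the weights \eqref{eq:continuous_weight} plays no role in this propagation step; it matters only for interpreting generalization to unseen terminal states, so the argument goes through for any admissible weights.

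Finally I would differentiate the interior identity and assemble the control law. Applying $\nabla_{\bar{x}_i}$ gives $\nabla_{\bar{x}_i}Z=\sum_f\tilde{\omega}_i^{\{f\}}\nabla_{\bar{x}_i}Z^{\{f\}}$, and substituting both this and the identity for $Z$ into \eqref{eq:mas-optimal-control-pf} yields
\[
\bar{u}_i^{*}(\bar{x}_i,t)=\bar{\sigma}_i\bar{\sigma}_i^\top\bar{B}_i(\bar{x}_i)^\top\frac{\sum_f\tilde{\omega}_i^{\{f\}}\nabla_{\bar{x}_i}Z^{\{f\}}}{\sum_e\tilde{\omega}_i^{\{e\}}Z^{\{e\}}}.
\]
Writing $\nabla_{\bar{x}_i}Z^{\{f\}}=Z^{\{f\}}\bigl(\nabla_{\bar{x}_i}Z^{\{f\}}/Z^{\{f\}}\bigr)$ and recognizing the factor $\bar{\sigma}_i\bar{\sigma}_i^\top\bar{B}_i(\bar{x}_i)^\top\bigl(\nabla_{\bar{x}_i}Z^{\{f\}}/Z^{\{f\}}\bigr)$ as the component optimal control $\bar{u}_i^{*^{\{f\}}}$, the expression rearranges into $\sum_f\bar{W}_i^{\{f\}}\bar{u}_i^{*^{\{f\}}}$ with $\bar{W}_i^{\{f\}}=\tilde{\omega}_i^{\{f\}}Z^{\{f\}}/\sum_e\tilde{\omega}_i^{\{e\}}Z^{\{e\}}$, as claimed. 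The only genuinely delicate point is the boundary-to-interior propagation, which rests on linearity plus well-posedness of the transformed HJB boundary value problem; everything else is bookkeeping with the exponential transformation and the linear control formula.
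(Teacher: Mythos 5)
Your proposal is correct and follows essentially the same route as the paper's proof: pass the terminal-cost composition through the exponential transformation to get the boundary identity for the desirability functions, propagate it to the interior via linearity of the transformed HJB equation, and then differentiate and rearrange the linear control formula into the weighted combination. Your explicit appeal to uniqueness of the boundary value problem merely fills in the step the paper states as ``once it holds on the boundary, it holds everywhere,'' so there is no substantive difference.
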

\begin{proof}
From composition of the terminal-cost function in continuous-time given by \eqref{thm2_final_cost_composition}, we have the following composition relationship of desirability functions by definition and it holds on the boundary joint states:   \begin{equation}\label{eq:thm2_des_relation_boundary}
    Z(\bar{x}_{i}^{t_f},t) = \sum\nolimits_{f = 1}^{F} \tilde{\omega}_i^{\{f\}}Z^{\{f\}}(\bar{x}_{i}^{t_f},t).
\end{equation} 
Since the desirability function solves a linear HJB equation, once condition \eqref{eq:thm2_des_relation_boundary} holds on the boundary, it holds everywhere and we have:
\begin{equation}\label{eq:thm2_des_relation_interior}
    Z(\bar{x}_{i},t) = \sum\nolimits_{f = 1}^{F} \tilde{\omega}_i^{\{f\}}Z^{\{f\}}(\bar{x}_{i},t).
\end{equation} 
For the composite problem, the joint optimal control action in the form of \eqref{eq:mas-optimal-control-pf}, can thus be reduced to
\begin{align*}
    &\quad \bar{u}_i^{*}(\bar{x}_i, t) = \bar{\sigma}_i \bar{\sigma}_i^\top \bar{B}_i^\top (\bar{x}_i) \frac{\nabla_{\bar{x}_i} Z(\bar{x}_i,t)}{Z(\bar{x}_i,t)}\\
    &= \bar{\sigma}_i \bar{\sigma}_i^\top \bar{B}_i^\top (\bar{x}_i) \frac{\nabla_{\bar{x}_i} \Big[{\sum\nolimits_{f=1}^{F} \tilde{\omega}_i^{\{f\}}Z^{\{f\}}(\bar{x}_i,t)}\Big]}{\sum\nolimits_{f=1}^{F} \tilde{\omega}_i^{\{f\}}Z^{\{f\}}(\bar{x}_i,t)}\\
    &= \frac{\sum\nolimits_{f=1}^{F}\bar{\sigma}_i \bar{\sigma}_i^\top \bar{B}_i^\top (\bar{x}_i)\nabla_{\bar{x}_i} \Big[\tilde{\omega}_i^{\{f\}}Z^{\{f\}}(\bar{x}_i,t)\Big]}{\sum\nolimits_{e=1}^{F} \tilde{\omega}_i^{\{e\}}Z^{\{e\}}(\bar{x}_i,t)}\\
    &= \frac{\sum\nolimits_{f=1}^{F}\bar{\sigma}_i \bar{\sigma}_i^\top \bar{B}_i^\top (\bar{x}_i)Z^{\{f\}}(\bar{x}_i,t)\nabla_{\bar{x}_i} \Big[\tilde{\omega}_i^{\{f\}}Z^{\{f\}}(\bar{x}_i,t)\Big]}{\sum\nolimits_{e=1}^{F} \tilde{\omega}_i^{\{e\}}Z^{\{e\}}(\bar{x}_i,t)Z^{\{f\}}(\bar{x}_i,t)}\\
    &= \sum\limits_{f=1}^{F}{\frac{\tilde{\omega}_i^{\{f\}}Z^{\{f\}}(\bar{x}_i,t)}{\sum\limits_{e=1}^{F} \tilde{\omega}_i^{\{e\}}Z^{\{e\}}(\bar{x}_i,t)}\bar{\sigma}_i \bar{\sigma}_i^\top \bar{B}_i^\top (\bar{x}_i)\frac{\nabla_{\bar{x}_i} Z^{\{f\}}(\bar{x}_i,t)}{Z^{\{f\}}(\bar{x}_i,t)}}\\
    &= \sum\nolimits_{f=1}^{F}\bar{W}_i^{\{f\}}(\bar{x}_i,t)\bar{u}_i^{*^{\{f\}}}(\bar{x}_i,t),
\end{align*}
with \begin{equation*}
    \bar{W}_i^{\{f\}}(\bar{x}_i,t) = \frac{\tilde{\omega}_i^{\{f\}}Z^{\{f\}}(\bar{x}_i,t)}{\sum\nolimits_{e=1}^{F} \tilde{\omega}_i^{\{e\}}Z^{\{e\}}(\bar{x}_i,t)}. 
\end{equation*}
\end{proof}

\section{Composition on Certified-safe Control Actions}\label{sec4}
We first propose a revised optimal stochastic control law incorporating ZCBF in a multi-agent setting, which can provide safety guarantees by enforcing certain conditions.
\begin{theorem}
[Safe and optimal control in MASs] \label{coro:mas-safe-opt-ctrl} In continuous-time joint dynamics \eqref{eq:cont_dyn_mas}, for factorial subsystem $\bar{\mathcal{N}}_i$ with joint states $\bar{x}_i$ and central agent state $x_i$, the desirability function is represented by $Z(\bar{x}_i,t) = \exp[-V_i(\bar{x}_i,t)/\lambda_i]$ with $\lambda_i \in \mathbb{R}$ and $V_i(\cdot,t)$ being the value function, and the joint optimal control action is given by \begin{equation}\label{eq:mas_opt_ctrl_law_joint}
    \bar{u}_i^*(\bar{x}_i,t) = \bar{\sigma}_i\bar{\sigma}_i^\top \bar{B}_i^\top(\bar{x}_i)\frac{\nabla_{\bar{x}_i} Z(\bar{x}_i,t)}{Z(\bar{x}_i,t)}.
\end{equation}
As discussed in Section \ref{pf:mas-setup}, only the central agent $i$ of each factorial subsystem $\bar{\mathcal{N}}_i$ selects or samples its local control action  $u_i^*(\bar{x}_i,t)$ from \eqref{eq:mas_opt_ctrl_law_joint}. Define the ZCBFs for high-degree multi-agent systems as $h_0(x_i) = h(x_i)$, and subsequently \begin{align}
    h_{k+1}(x_i) &= \frac{\partial{h_k}}{\partial x_i}g_i(x_i)\nonumber\\ &+ \frac{1}{2}\textrm{tr}(\sigma_i^\top B_i^\top(\frac{\partial^2 h_k}{\partial {x_i}^2}) B_i\sigma_i)+h_k(x_i), 
\end{align}
$k=0,1,2,\dots$, and define $\bar{\mathcal{C}}_{r,i} = \bigcap_{k=0}^r \mathcal{C}_{k,i}$ with $\mathcal{C}_{k,i} = \{x_i: h_k(x_i) \ge 0\}$ and $\mathcal{C}_i = \{x_i: h(x_i) \ge 0\}$. The CBF constraint for the high-degree multi-agent system is given by \begin{equation}\label{eq:cbf-high-deg-const}
    \frac{\partial{h}_r}{\partial x_i}(g_i(x_i)+B_i(x_i)u)+\frac{1}{2}\textrm{tr}(\sigma_i^\top B_i^\top\frac{\partial^2 h_r}{\partial x_i^2}B_i\sigma_i) \ge -h_r(x_i),
\end{equation}
where $r$ is selected such that $ \frac{\partial{h}_k}{\partial x_i}B_i(x_i)u \ge 0$ holds for $k<r$ and \eqref{eq:cbf-high-deg-const} holds for $k=r$. Then, the optimal control for the central agent $i$ of each factorial subsystem $\bar{\mathcal{N}}_i$ that can also guarantee the subsystem safety (state-invariant) within $\mathcal{C}_i$ is given by 
\begin{align}\label{eq:component-safe-opt-control}
    {u}_{s,i}^* &= \argmin_{u \in \mathbb{R}^P} \|{u}_i^* - u\|_2, \\
     \textrm{s.t.} &\eqref{eq:cbf-high-deg-const}  \nonumber
\end{align}
where $u_i^*$ is the optimal local control action for the central agent $i$ sampled from \eqref{eq:mas_opt_ctrl_law_joint}.
\end{theorem}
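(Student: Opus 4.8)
The plan is to reduce the multi-agent safety claim to the single-agent high-degree ZCBF result already established in Theorem~\ref{lemma:zcbf-high-deg}. The crucial structural observation is that, under the decentralization topology of Section~\ref{pf:mas-setup}, the passive dynamics $\bar{g}_i$, the control matrix $\bar{B}_i$, and the noise covariance $\bar{\sigma}_i$ are all block-diagonal with mutually independent components, so the central agent's own coordinate $x_i$ obeys the self-contained SDE $dx_i = g_i(x_i)\,dt + B_i(x_i)[u_i\,dt + \sigma_i\,d\omega_i]$, which is exactly of the single-agent form \eqref{eq:cont_dyn} with $(g,B,\sigma)$ replaced by $(g_i,B_i,\sigma_i)$. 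Because the marginal dynamics of $x_i$ do not couple to the neighbors' states, the functions $h_k(x_i)$, the nested sets $\mathcal{C}_{k,i}$, and the constraint \eqref{eq:cbf-high-deg-const} are expressed purely in terms of $x_i$, so Theorem~\ref{lemma:zcbf-high-deg} transfers verbatim to this marginal process.

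First I would argue that the safety filter \eqref{eq:component-safe-opt-control} is well-posed. The control $u$ enters \eqref{eq:cbf-high-deg-const} only through the linear term $\frac{\partial h_r}{\partial x_i}B_i(x_i)u$, so the constraint describes a closed half-space in $u$, which is closed and convex; the auxiliary sign conditions $\frac{\partial h_k}{\partial x_i}B_i(x_i)u \ge 0$ for $k<r$ are likewise affine in $u$ and are rendered inactive by the choice of $r$, since the control does not appear in the lower-degree derivatives. Non-emptiness of this set over the safe region is precisely the defining property of a valid ZCBF, i.e. the existence clause underlying \eqref{eq:zcbf-def}. Hence $u_{s,i}^*$ is the Euclidean projection of the LSOC-optimal action $u_i^*$ onto a non-empty closed convex set, and therefore exists and is unique by the standard nearest-point theorem.

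Next I would verify the hypotheses of Theorem~\ref{lemma:zcbf-high-deg} for the filtered control. By construction $u_{s,i}^*$ is feasible for \eqref{eq:cbf-high-deg-const}, so the top-degree inequality holds at $k=r$, and the choice of $r$ guarantees $\frac{\partial h_k}{\partial x_i}B_i(x_i)u_{s,i}^* \ge 0$ for all $k<r$. Since the filter is re-solved at each state visited, the realized feedback $u_{s,i}^*(x_i^t)$ satisfies these conditions at every $t$ along the trajectory, matching the ``$u^t$ for all $t$'' premise of Theorem~\ref{lemma:zcbf-high-deg}. Invoking that theorem with the central-agent data $(g_i,B_i,\sigma_i)$ then yields $P_r(x_i^t \in \mathcal{C}_i \ \forall t)=1$ whenever $x_i^0 \in \bar{\mathcal{C}}_{r,i}$, which is the asserted state-invariance. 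The optimality assertion is immediate from \eqref{eq:component-safe-opt-control}: among all admissible safety-certifying controls, $u_{s,i}^*$ deviates least from $u_i^*$, so it is the minimal correction of the LSOC action that restores safety.

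The step I expect to be the main obstacle is establishing feasibility of the filter uniformly over $\bar{\mathcal{C}}_{r,i}$: if the half-space defined by \eqref{eq:cbf-high-deg-const} were empty at some reachable state (for instance where $\frac{\partial h_r}{\partial x_i}B_i = 0$ while the drift-plus-diffusion term violates the bound), the projection would be undefined and the invariance argument would collapse. This is exactly why the validity of the ZCBF must be assumed as a hypothesis rather than derived. A secondary point requiring care is justifying that the block-diagonal decoupling genuinely lets the single-agent Theorem~\ref{lemma:zcbf-high-deg} apply to $x_i$ with no residual cross-terms from the neighbors leaking into the diffusion generator $\frac{1}{2}\textrm{tr}(\sigma_i^\top B_i^\top (\partial^2 h_r/\partial x_i^2) B_i \sigma_i)$.
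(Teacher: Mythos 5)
Your proposal is correct and follows essentially the same route as the paper's own proof: take the LSOC solution as the baseline, treat \eqref{eq:component-safe-opt-control} as a projection onto the affine-in-$u$ constraint \eqref{eq:cbf-high-deg-const}, and invoke Theorem~\ref{lemma:zcbf-high-deg} (with the agent-level data $g_i,B_i,\sigma_i$) to conclude invariance of $\mathcal{C}_i$. You are somewhat more careful than the paper on two points it leaves implicit --- the block-diagonal decoupling that lets the single-agent ZCBF theorem apply to the central agent's marginal SDE, and the non-emptiness/convexity of the feasible set that makes the projection well-defined --- but these are refinements of the same argument, not a different one.
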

\begin{proof}
The first part (optimal control law) of the theorem comes from the solution to a linearly solvable optimal control problem and is established in \eqref{eq:mas-optimal-control-pf} of Section \ref{pf:mas-setup}. However, simply applying the optimal control strategy may not render the system safe considering the system stochastic noise, since the control goal such as obstacle-avoidance is captured in the cost function in the form of soft constraints. For the second part, a constrained optimization framework is applied to both minimizing the difference towards the ideal optimal control sampled from \eqref{eq:mas-optimal-control-pf} and enforcing the ZCBF constraints \eqref{eq:cbf-high-deg-const}  for safety concerns. For high-degree systems, $\frac{\partial{h_k}}{\partial x_i}B_i(x_i) = 0$ for some $x_i$ when $k<r$; and $r$ is selected as the least positive integer such that $\frac{\partial{h_r}}{\partial x_i}B_i(x_i) \ne 0$ and where $u$ starts to explicitly show up linearly in \eqref{eq:cbf-high-deg-const}. Finally, if the condition \eqref{eq:cbf-high-deg-const} holds, the system state is invariant within $\mathcal{C}_i$ and guaranteed by Theorem \ref{lemma:zcbf-high-deg}.
\end{proof}
Furthermore, utilizing the linear compositionality of the optimal control action in multi-agent systems, along with the fact that the ZCBF constraints are affine in control, the above results can be extended to a task-generalization setting, where the optimal solution to a new control problem can be achieved by taking a weighted mixture of existing control actions. Meanwhile, the composite control action is certified-safe given by an additional step of post-composite constrained optimization using ZCBFs.
\begin{theorem}[Generalization of safe and optimal control in MASs] \label{coro:generalizetion-mas-safe-opt}
Suppose there are $F$ multi-agent LSOC problems in continuous-time on factorial subsystem $\bar{\mathcal{N}}_i$ with joint states $\bar{x}_i$ and central agent state $x_i$, governed by the same joint dynamics \eqref{eq:cont_dyn_mas}, running cost rates \eqref{eq:mas-running-cost}, and the set of interior joint states $\bar{\mathcal{I}}_i$, but various terminal costs $\phi_i^{\{f\}}$ and terminal joint states $\bar{x}_i^{d^{\{f\}}}$. The safe and optimal control action $u_{s,i}^{*^{\{f\}}} (f=1,2,\dots,F)$ for central agent $i$  solving each single problem is computed via \eqref{eq:component-safe-opt-control}. Denote the terminal joint state for a new problem as $\bar{x}_i^d$ and define the composition weights as \begin{equation}
    \bar{\omega}_i^{\{f\}} = \exp(-\frac{1}{2}(\bar{x}_i^d - \bar{x}_i^{d^{\{f\}}})^\top P(\bar{x}_i^d - \bar{x}_i^{d^{\{f\}}})),
\end{equation}
with $P$ being a positive definite diagonal matrix representing the kernel widths. Suppose the terminal cost for the new problem satisfies \begin{equation}\label{thm2_final_cost_composition}
    \phi_i(\bar{x}_{i}^{t_f}) = -\log(\sum\nolimits_{f=1}^F \tilde{\omega}_i^{\{f\}} \exp(-\phi_i^{\{f\}}(\bar{x}_{i}^{t_f}))),
\end{equation} where $\bar{x}_i^{t_f}$ denotes the boundary joint states and $\tilde{\omega}_i^{\{f\}} = \frac{\bar{\omega}_i^{\{f\}}}{\sum_{f=1}^{F}\bar{\omega}_i^{\{f\}}} $. The optimal control that solves the new problem is directly computable through a weighted combination of the component problem control solutions:\begin{equation}\label{eq:composite-control-nocbf-mas}
    u_i^*(\bar{x}_i,t) = \sum_{f=1}^F \bar{W}_i^{\{f\}}(\bar{x}_i,t)u_{s,i}^{*^{\{f\}}}(\bar{x}_i,t)
\end{equation}
with \begin{align*}
    \bar{W}_i^{\{f\}}(\bar{x}_i,t) &= \frac{\tilde{\omega}_i^{\{f\}}Z_i^{\{f\}}(\bar{x}_i,t)}{\sum_{e=1}^{F}\tilde{\omega}_i^{\{e\}}Z_i^{\{e\}}(\bar{x}_i,t)},
\end{align*} where the local control action $u_{s,j}^{*^{\{f\}}}$ is only sampled or selected from the computed joint control $\bar{u}_{s,j}^{*^{\{f\}}}$ on factorial subsystem $\bar{\mathcal{N}}_j$. Furthermore, the optimal control for central agent $i$ of each factorial subsystem $\bar{\mathcal{N}}_i$ that can solve the composite task and also guarantee the subsystem safety (state-invariant) within a desired safe set $\mathcal{C}_i = \{x_i:h(x_i) \ge 0\}$ is given by
\begin{align}\label{eq:composite-safe-opt-control}
    {u}_{s,i}^* &= \argmin_{u \in \mathbb{R}^P} \|{u}_i^* - u\|_2, \\
     \textrm{s.t.} &\eqref{eq:cbf-high-deg-const}  \nonumber
\end{align}
where $u_i^*$ is computed using \eqref{eq:composite-control-nocbf-mas}.
\end{theorem}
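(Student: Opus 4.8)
The plan is to read this statement as the fusion of two results already in hand and to assemble it in two stages that mirror the preceding proofs exactly. The first stage reuses the compositionality argument of Theorem~\ref{theorem_cont} to justify the baseline mixing formula \eqref{eq:composite-control-nocbf-mas}, and the second stage applies the safety-filter argument of Theorem~\ref{coro:mas-safe-opt-ctrl} to the control it produces.

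For the first stage I would begin from the terminal-cost composition hypothesis \eqref{thm2_final_cost_composition}, apply the exponential transform \eqref{eq:mas-desir}, and read off the linear composition of desirability functions on the boundary, $Z(\bar{x}_i^{t_f},t)=\sum_{f}\tilde{\omega}_i^{\{f\}}Z^{\{f\}}(\bar{x}_i^{t_f},t)$. Since each $Z^{\{f\}}$, and hence their weighted sum, solves the same linear HJB equation produced by the exponential transformation and they share identical boundary data, uniqueness propagates this identity into the interior $\bar{\mathcal{I}}_i$, giving \eqref{eq:thm2_des_relation_interior}. Substituting into the joint optimal control law \eqref{eq:mas_opt_ctrl_law_joint} and performing the same regrouping used in the proof of Theorem~\ref{theorem_cont}---multiplying numerator and denominator by $Z^{\{f\}}$ so that each component control $\bar{\sigma}_i\bar{\sigma}_i^\top\bar{B}_i^\top\nabla_{\bar{x}_i}Z^{\{f\}}/Z^{\{f\}}$ is exposed---collapses the expression into the convex combination $\sum_{f}\bar{W}_i^{\{f\}}\bar{u}_i^{*^{\{f\}}}$. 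Selecting the central-agent component (which commutes with the scalar weights $\bar{W}_i^{\{f\}}$) then yields \eqref{eq:composite-control-nocbf-mas}, with each $u_{s,i}^{*^{\{f\}}}$ the per-task safe control furnished by \eqref{eq:component-safe-opt-control}.

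For the second stage I would feed the baseline composite control $u_i^*$ of \eqref{eq:composite-control-nocbf-mas} as the reference into the minimum-norm projection \eqref{eq:composite-safe-opt-control} and invoke Theorem~\ref{coro:mas-safe-opt-ctrl} essentially verbatim. The high-degree constraint \eqref{eq:cbf-high-deg-const} is affine in $u$, since only the term $\frac{\partial h_r}{\partial x_i}B_i(x_i)u$ depends on the control and does so linearly; its feasible set is therefore a half-space, the minimum-norm projection onto it is well posed whenever the set is nonempty, and Theorem~\ref{lemma:zcbf-high-deg} then certifies state-invariance of $x_i^t$ within $\mathcal{C}_i$. It is worth recording here that the mixing coefficients are genuine convex weights: $\tilde{\omega}_i^{\{f\}}\ge 0$ together with $Z^{\{f\}}>0$ force $\bar{W}_i^{\{f\}}\ge 0$ and $\sum_{f}\bar{W}_i^{\{f\}}=1$, so whenever every component $u_{s,i}^{*^{\{f\}}}$ already satisfies the common affine constraint \eqref{eq:cbf-high-deg-const} at the current state, their convex combination does too; the projection step is then inactive and only restores exact feasibility where the generalization is approximate.

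The point most in need of care---and the one I expect to be the real obstacle---is the gap between the \emph{safe} component controls $u_{s,i}^{*^{\{f\}}}$ that appear in \eqref{eq:composite-control-nocbf-mas} and the \emph{unfiltered} LSOC optima $\bar{u}_i^{*^{\{f\}}}$ for which the linear-HJB compositionality of Theorem~\ref{theorem_cont} is exact. The desirability composition guarantees that the unfiltered controls combine exactly, but once each component has been passed through its own CBF projection the mixing formula is only a baseline approximation to the true composite optimum. I would therefore be careful not to overclaim exact optimality of the filtered mixture, and instead source the rigorous safety certificate entirely from the post-composite projection \eqref{eq:composite-safe-opt-control} through Theorem~\ref{lemma:zcbf-high-deg}, which is what the theorem actually asserts.
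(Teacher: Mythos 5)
Your proposal is correct and follows essentially the same route as the paper: the baseline mixture \eqref{eq:composite-control-nocbf-mas} is justified by the desirability-function compositionality argument of Theorem~\ref{theorem_cont}, and the safety certificate is then sourced from the affine-in-control CBF constraint and the post-composite projection via Theorems~\ref{coro:mas-safe-opt-ctrl} and~\ref{lemma:zcbf-high-deg}. Your closing caveat---that exact linear compositionality holds for the \emph{unfiltered} LSOC optima $\bar{u}_i^{*^{\{f\}}}$ while the theorem mixes the CBF-filtered components $u_{s,i}^{*^{\{f\}}}$, so the mixture is only a baseline whose rigorous safety guarantee must come entirely from the final projection---is in fact more explicit and more careful than the paper's own proof, which only asserts that the affine constraint ``preserves'' linearity without confronting that gap.
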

\begin{proof} The first part of the theorem is an immediate result from Theorem \ref{theorem_cont}, where the safe control from Theorem \ref{coro:mas-safe-opt-ctrl} is applied as the primitives, and only the sampled central agent control is considered. Since  \eqref{eq:cbf-high-deg-const} is affine in control $u$ on considered system dynamics, and each component problem control law solves a constrained optimization problem where the baseline optimal control is linearly solvable (details for the optimal control linear compositionality can also be found in \cite{song2021compositionality}),  the linearity property preserves and enables the compositionality of such safe and optimal control actions computed using Theorem \ref{coro:mas-safe-opt-ctrl}. Furthermore, condition \eqref{eq:cbf-high-deg-const} applied to the composite control action from \eqref{eq:composite-control-nocbf-mas} ensures that the system is state-invariant within $\mathcal{C}_i$ by Theorem \ref{lemma:zcbf-high-deg}.
\end{proof}

\section{Simulation Results}
\label{sec5}
We performed numerical simulations in Matlab for both single-agent systems (single UAV) and cooperative networked multi-agent systems (cooperative UAV team) to demonstrate the proposed results. Each UAV is described by the following continuous-time dynamics:
\begin{equation}\label{eq:sim-uav-dyn}
	\setlength\arraycolsep{1pt}
	{\scalefont{0.8} \left(\begin{matrix}
		dx_i\\
		dy_i\\
		dv_i\\
		d\varphi_i
		\end{matrix}\right) = 
		\left(\begin{matrix}
		v_i \cos \varphi_i\\
		v_i \sin \varphi_i\\
		0\\
		0
		\end{matrix}\right) dt +  \begin{pmatrix}
		0 & 0\\
		0 & 0\\
		1 & 0\\
		0 & 1
		\end{pmatrix} \left[ \left( \begin{matrix}
		u_i\\
		w_i
		\end{matrix}  \right) dt + \begin{pmatrix}
		\sigma_i & 0\\
		0 & \nu_i
		\end{pmatrix}d\omega_i
		\right], }
\end{equation}
where $(x_i,y_i), v_i, \varphi_i$ denote the position coordinate, forward velocity and heading angle for UAV $i$. The system state vector is $(x_i,y_i,v_i,\varphi_i)^\top$; the forward acceleration $u_i$ and angular velocity $w_i$ are the control inputs, and $\omega_i$ is the standard Brownian-motion disturbance. We specify the noise level as $\sigma_i = 0.05$ and $\nu_i = 0.025$ throughout the simulation.

In all the following experiments, we compute the baseline optimal control actions (\eqref{eq:sas-optimal-control-pf} in the single-agent scenario and \eqref{eq:mas_opt_ctrl_law_joint}, \eqref{eq:composite-control-nocbf-mas} in the multi-agent scenario) in a path integral approximation framework, and obtain the constrained optimal control actions using the above theorems. In the following  obstacle-avoidance tasks, each individual obstacle is described by an independent set of ZCBF constraints. We also explicitly check that the start point satisfies all the ZCBF constraints in place (especially $x^0 \in \bar{\mathcal{C}}_r$ as  in Theorem \ref{lemma:zcbf-high-deg}), which renders the proposed methodology applicable.
\subsection{Single-agent system experiments with the proposed control strategy}
\subsubsection{Single-problem experiment}
We first compare the performance between the filtered control actions with CBFs and the baseline optimal control actions in a simple case. The UAV is governed by the dynamics in \eqref{eq:sim-uav-dyn} and is tasked to fly from the start $(5,5)$ towards a target $(35,20)$ while avoiding the obstacles. The running cost for the UAV is in the following form:\begin{equation}\label{sim:single-agent-single-prob-running-cost}
    q(\mathbf{x}) = \|(x,y)-(x^{t_f},y^{t_f})\|_2 - d^{\max},
\end{equation}
where $\|(x,y)-(x^{t_f},y^{t_f})\|_2$ computes the distance to the goal position for the UAV, and $d^{\max}$ denotes the distance between the initial position and target position for the UAV. The obstacles are incorporated by greater state-related cost values (i.e. $q(x) = 160$) in the baseline optimal control action computation. Here, we utilize CBFs for enhanced safety and construct a set of CBFs for the description of one circular obstacle centered at $(x_c,y_c)$ with a radius of $r_c$ as follows:\begin{align}
    h_0(x) &= h(x) = (x-x_c)^2+(y-y_c)^2 -(r_c+D_s)^2, \label{sim:eq-cbf-h0}\\
    h_1(x) &= \frac{\partial{h_0}}{\partial x}g(x) + \frac{1}{2}\textrm{tr}(\sigma^\top B^\top (\frac{\partial^2 h_0}{\partial x^2})B \sigma)+h_0(x) \nonumber\\
    &= 2(x-x_c)v\cos\phi + 2(y-y_c)v\sin\phi\nonumber \\
    &+(x-x_c)^2+(y-y_c)^2-(r_c+D_s)^2, \label{sim:eq-cbf-h1}
\end{align}
where $g(x), B, \sigma$ are defined in the dynamics \eqref{eq:cont_dyn} and $D_s$ is a user-defined safety margin.

The simulation runs for 10 times independently, under both the safe optimal control actions  and the ideal optimal control actions, respectively. The execution trajectories are shown in Fig. \ref{fig_comp_with_vs_without_cbf}, where the obstacles are denoted by the filled circles, the trajectories under the safe optimal control are denoted by the green lines, and the trajectories under the ideal optimal control are denoted by the red lines. As Fig. \ref{fig_comp_with_vs_without_cbf} shows, the performance of the ideal optimal control is not always uniform on avoiding the lowest obstacle and relies on fine parameter-tuning. However, the trajectories using the safe optimal control actions can always guarantee a safe margin surrounding the obstacles throughout all the simulations. Meanwhile, the safety margin can be tuned as a hyper-parameter for achieving the 
least conservative and feasible trajectory.
\begin{figure}[!t]
\centerline{\includegraphics[width=\columnwidth]{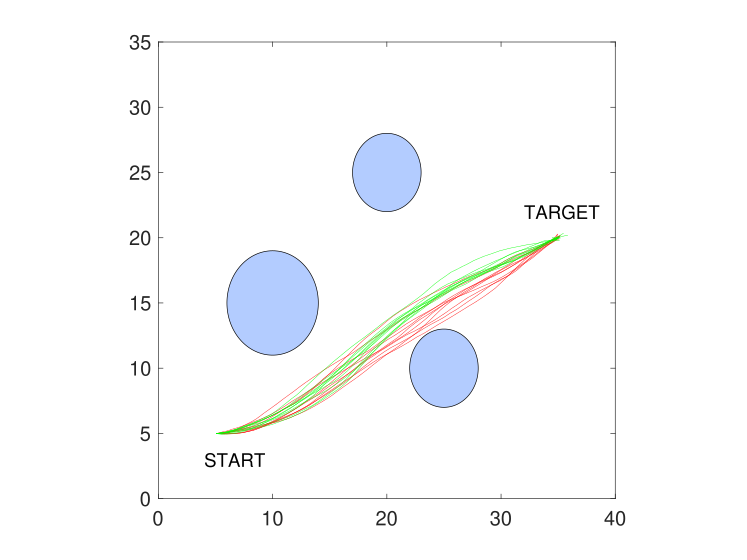}}
\caption{Comparison between executive trajectories with (denoted in the green lines) and without (denoted in the red lines) CBF constraints in a single-agent case.}
\label{fig_comp_with_vs_without_cbf}
\end{figure}

\begin{figure}[!ht]
\centerline{\includegraphics[width=\columnwidth]{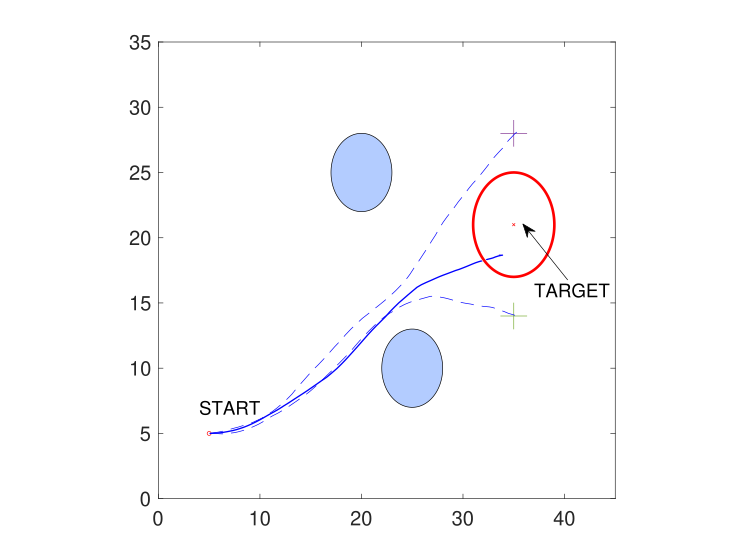}}
\caption{Composition in a single-agent case, where the dashed lines denote the component problem execution trajectories, and the solid lines denote the composite problem execution trajectory. The red circle centered at the target denotes the allowable error range.}
\label{fig_comp_single_agent}
\end{figure}
\subsubsection{Task-generalization experiment}
In this example, we consider two component problems seeking safe optimal control actions under the identical dynamics \eqref{eq:sim-uav-dyn}, running cost rates \eqref{sim:single-agent-single-prob-running-cost}, set of interior states, but have different final costs and terminal states. The final costs take the form of $\phi = \frac{d}{2}(|x-x_d|+c)+\alpha$, where the two problems have different cost parameters $c,d,\alpha$. The first problem ensures that the UAV starting from $(5,5)$ reaches the upper target $(35,28)$, while the second problem ensures that the UAV starting from $(5,5)$ reaches the lower target $(35,14)$. Further, we consider how these  safe control actions can help with solving a new problem aiming at reaching the target $(35,21)$. Here, we take a weighted mixture of safe and optimal control actions solving the component problems and obtain the optimal solution constrained by the ZCBFs. The execution trajectories for the component problems and the composite problem are shown in Fig. \ref{fig_comp_single_agent}. 

% \subsection{Evaluation of the safe and optimal control law for single task in a networked MAS under various safe margins \textcolor{red}{why no color here ...}}

In Fig. \ref{fig_comp_single_agent}, the executive trajectories of the two component problems are denoted by the dashed lines, and the executive trajectory running the composite safe optimal control action is denoted by the solid line. The targets of the two component tasks are denoted by the cross markers. As Fig. \ref{fig_comp_single_agent} shows, all these trajectories can avoid the obstacles with sufficient safety margins. The component problem solution can lead the UAV to the targeted goal accurately. However, the safe optimal control action by composition leads the UAV to a target within some acceptable error range, denoted by the red solid circle.

\subsection{Evaluation of the safe and optimal control law for a single task in a networked MAS under various safety margins }

For the networked MAS simulations, we consider a cooperative UAV team as illustrated in Fig. \ref{fig_team_flight}, where UAVs 1 and 2 fly cooperatively (distance-minimized) and UAV 3 flies independently towards the goal joint  states while avoiding the obstacles. According to the factorization introduced in Section \ref{pf:mas-setup}, the joint states of the three factorial subsystems are $\bar{\mathbf{x}}_1 = [\mathbf{x}_1;\mathbf{x}_2]^\top,\bar{\mathbf{x}}_2 = [\mathbf{x}_1;\mathbf{x}_2;\mathbf{x}_3]^\top, \bar{\mathbf{x}}_3 = [\mathbf{x}_2;\mathbf{x}_3]^\top$, where $\mathbf{x}_i = [x_i;y_i;v_i;\varphi_i]^\top$ and $(x_i,y_i), v_i, \varphi_i$ denote the position coordinate, forward velocity and heading angle for UAV $i$. The system joint dynamics can be described by \eqref{eq:cont_dyn_mas}.  The coordination between UAVs is considered by the running cost in the following form:\begin{align}
    q_1(\bar{\mathbf{x}}_1) &= 0.7\cdot (\|(x_1,y_1)-(x_1^{t_f},y_1^{t_f})\|_2-d_1^{\textrm{max}}) \\ 
    & \hspace{35pt}+ 1.4 \cdot(\|(x_1,y_1)-(x_2,y_2)\|_2-d_{12}^{\textrm{max}}),\nonumber\\
    q_2(\bar{\mathbf{x}}_2)   &= 0.7\cdot(\|(x_2,y_2)-(x_2^{t_f},y_2^{t_f})\|_2-d_2^{\textrm{max}}) \\
    &\hspace{35pt} + 1.4\cdot(\|(x_2,y_2)-(x_1,y_1)\|_2-d_{21}^{\textrm{max}}),\nonumber\\
    q_3(\bar{\mathbf{x}}_3) &= \|(x_3,y_3)-(x_3^{t_f},y_3^{t_f})\|_2-d_3^{\textrm{max}},
\end{align}
where $\|(x_i,y_i)-(x_i^{t_f},y_i^{t_f})\|_2$ calculates the distance to the goal position for UAV $i$, $\|(x_i,y_i) - (x_j,y_j)\|_2$ calculates the distance between UAVs $i$ and $j$, $d_i^{\max}$ denotes the distance between the initial position and target position for UAV $i$, and $d_{ij}^{\max}$ denotes the initial distance between UAVs $i$ and $j$. The cost parameters and coefficients can be tuned for better performance and algorithm stability.

\begin{figure}[!ht]
\centerline{\includegraphics[width=0.9\columnwidth]{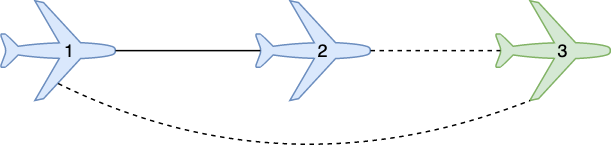}}
\caption{A cooperative UAV team with UAVs 1 and 2 flying cooperatively and UAV 3 flying independently.}
\label{fig_team_flight}
\end{figure}

We first evaluate optimization-based control actions from \eqref{eq:mas_opt_ctrl_law_joint} on the networked MAS, and run 10 independent simulations. The trajectories for the UAV team are shown in Fig. \ref{fig_component_without_cbf}, where trajectories of UAVs 1, 2 and 3 are denoted by the red, blue, and green lines, respectively. The two obstacles are represented by the filled circles. The start point for UAV $2$ is labeled by `S2' in Fig. \ref{fig_component_without_cbf}. As the figure shows, all UAVs can reach the appointed targets. However, although the control performance can be improved by tuning the obstacle state costs and running cost coefficients, no guarantees on the obstacle-avoiding goal is achievable. Among the observed runs, there are some cases when UAV 3 collides with the lower obstacle due to the stochastic noise.  
\begin{figure}[!t]
\centerline{\includegraphics[width=\columnwidth]{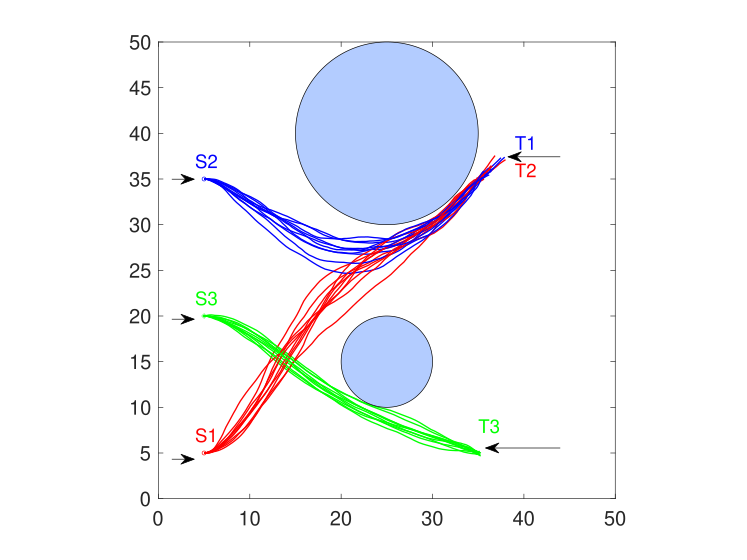}}
\caption{Illustration of one  problem solution on a networked UAV team without CBF constraints. The three agents' trajectories are denoted in the red, blue, and green lines, respectively. The `S1' marker denotes the start of UAV 1 and the `T1' marker denotes the target of UAV 1.}
\label{fig_component_without_cbf}
\end{figure}

We further extend the experiments using the control actions  subject to the CBF constraints according to \eqref{eq:component-safe-opt-control}. The CBFs are designed similarly as \eqref{sim:eq-cbf-h0} and \eqref{sim:eq-cbf-h1}. We run 10 independent simulations and the executive trajectories are shown in Fig. \ref{fig_component_with_cbf}. Similar to the case under ideal optimal control actions, all UAVs can reach the target states, and the coordination between UAVs 1 and 2 is achieved. Furthermore, although every single trajectory may differ much due to the stochastic noise, all trajectories of each single UAV can avoid the placed obstacles with some safety margins. 

\begin{figure}[!t]
\centerline{\includegraphics[width=\columnwidth]{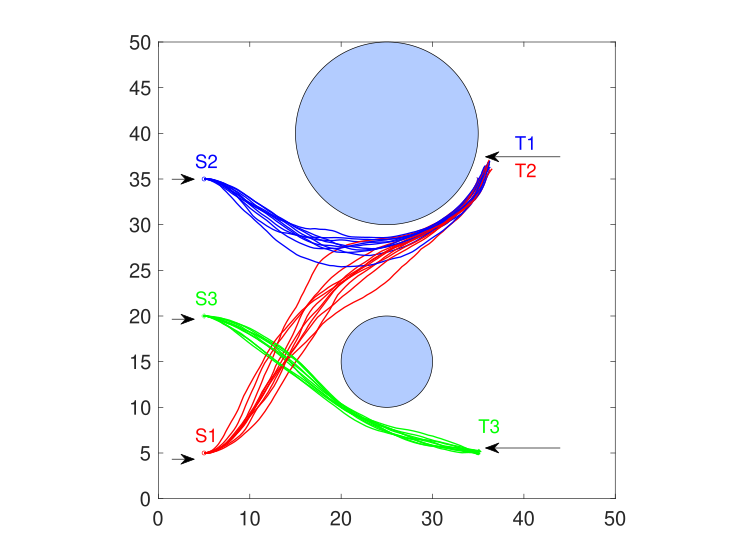}}
\caption{Illustration of one problem solution on a networked UAV team with CBF constraints. The three agents' trajectories are denoted in the red, blue, and green lines, respectively. The `S1' marker denotes the start of UAV 1 and the `T1' marker denotes the target of UAV 1.}
\label{fig_component_with_cbf}
\end{figure}

Compared with the case of using ideal optimal control actions without CBFs, where no margin can be achieved surrounding the obstacles for the trajectories, the safety margin achieved from Theorem \ref{coro:mas-safe-opt-ctrl} can further be tuned. We evaluate the relationship between the commanded margins in CBF design (e.g. $D_s$ parameter in \eqref{sim:eq-cbf-h1}) and the achieved minimum distance to the nearest obstacles for all UAVs throughout the simulations, and a quantitative illustration result is given in Fig. \ref{fig_component_quant_result}. We can observe that the UAVs running unfiltered optimal control actions fail to meet the safety margin, where UAVs under safe optimal control actions can avoid all the obstacles and the minimum distance to both obstacles is larger than the threshold set in the CBF design in \eqref{sim:eq-cbf-h0} and \eqref{sim:eq-cbf-h1}.

\subsection{Performance of the composite safe and optimal control law generalizing to a new task in a networked MAS}

\begin{figure}[!ht]
\centerline{\includegraphics[width=0.9\columnwidth]{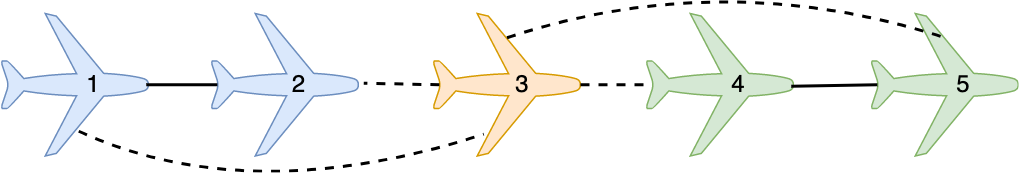}}
\caption{A cooperative UAV team with UAVs 1 and 2, 4 and 5 flying cooperatively and UAV 3 flying independently.}
\label{fig_team_flight_five}
\end{figure}
We further evaluate the proposed safe optimal control strategy in a task-generalization setting on the cooperative UAV team in Fig. \ref{fig_team_flight_five}. The five UAVs work in three groups,  where UAVs 1 and 2, 4 and 5 fly cooperatively (distance-minimized) and UAV 3 flies independently towards the goal while avoiding some obstacles. We consider two component problems, subject to identical joint dynamics \eqref{eq:cont_dyn_mas}, joint running costs \eqref{eq:mas-running-cost}, and set of interior joint states $\bar{\mathcal{I}}_i$ for factorial subsystem $\bar{\mathcal{N}}_i$, but different final costs and terminal joint states. In the two problems, the target position for all the UAVs are $(35,28)$ and $(35,14)$, respectively. In each problem, the safe optimal control leading the UAV team to the target is obtained according to \eqref{eq:component-safe-opt-control}, and the execution trajectories are demonstrated in Fig. \ref{fig_component_problem_comp}, where the trajectories of UAVs 1,2, 3, 4 and 5 are denoted by the red, blue, green, magenta and cyan lines, respectively. The target positions in the two different problems are labeled by the stars.

\begin{figure}[!t]
\centerline{\includegraphics[width=0.9\columnwidth]{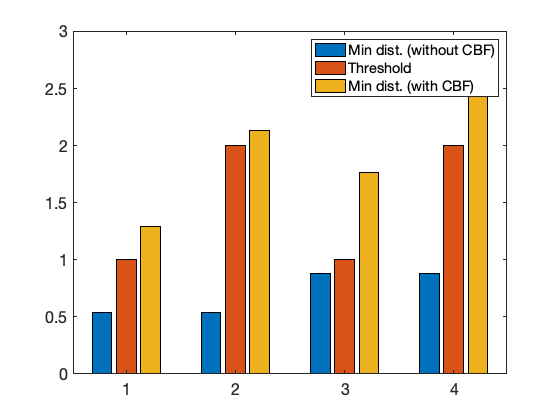}}
\caption{Minimum distance to the two obstacles under different CBF margins throughout the simulations (data batches 1 and 2 show the minimum distance to the upper obstacle, data batches 3 and 4 show the minimum distance to the lower obstacle).}
\label{fig_component_quant_result}
\end{figure}

\begin{figure}[!t]
\centerline{\includegraphics[width=\columnwidth]{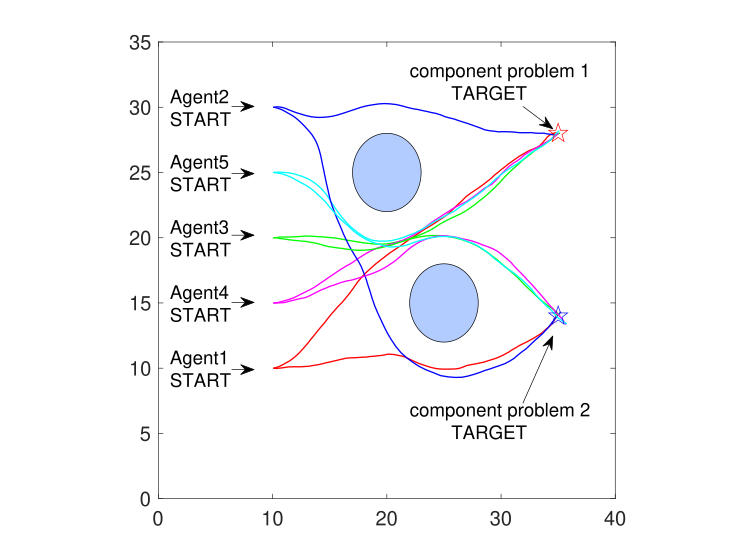}}
\caption{Illustration of the solutions to the component problems for composition on the safe optimal control law, with lines in red, blue, green, magenta and cyan denoting the trajectories of agents 1, 2, 3, 4 and 5, respectively. Target of each component problem is denoted by a star.}
\label{fig_component_problem_comp}
\end{figure}

\begin{figure}[!t]
\centerline{\includegraphics[width=\columnwidth]{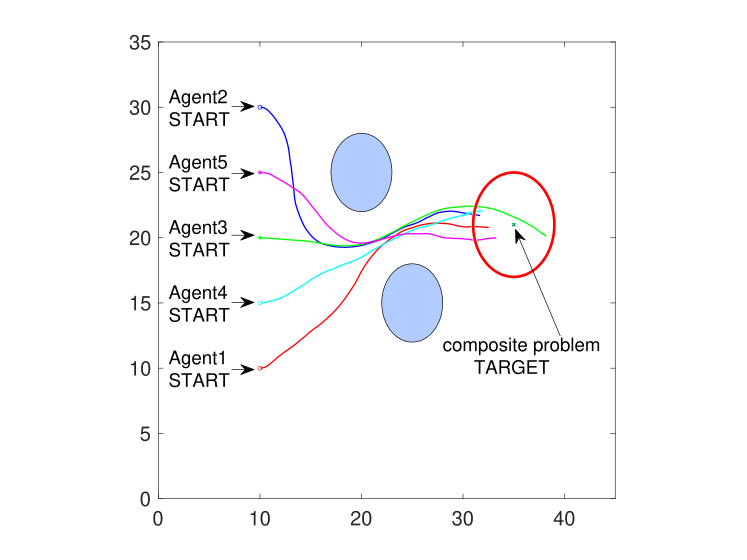}}
\caption{Illustration of the execution trajectories using the safe composite control law, with lines in red, blue, green, magenta, and cyan denoting the trajectories of agents 1, 2, 3, 4 and 5, respectively. The red circle centered at the new target denotes the allowable error range.}
\label{fig_composite_problem_comp}
\end{figure}
Once the component problem safe optimal control action is obtained, a weighted mixture on the primitives specified by \eqref{eq:composite-control-nocbf-mas} can be taken to achieve the composite safe optimal control action, and a further constrained optimization step according to \eqref{eq:composite-safe-opt-control} will ensure that the resulting optimal control action is safe-guaranteed. The execution trajectories of the UAV team using the composite safe control action to solve a new problem are shown in Fig. \ref{fig_composite_problem_comp}, where the red solid circle demonstrates an allowable error range. As Fig. \ref{fig_composite_problem_comp} shows, under the filtered composite safe optimal control action, all the UAVs can avoid the obstacles with suitable safe margins, and UAVs 1 and 2, 4 and 5 can cooperate. They can also reach the target position but subject to some error.

Similar to what Fig. \ref{fig_comp_single_agent} illustrates, beyond the impact of the stochastic noise, the composition is not exactly accurate here since the desirability function solving the transformed linear HJB equation discussed in Section \ref{pf:soc-problems} is time-dependent for continuous-time systems, and the composition on time is not achievable. Furthermore, though each component controller by itself is in the feedback-control form, the composite control is not performing feedback on the new continuous states and thus fails to have the state error-compensating capacity; it is also fragile to the external noise. However, by running several simulations and selecting the optimal local control action for each agent independently, the obtained performance, as illustrated in Fig. \ref{fig_composite_problem_comp}, is satisfying, and the terminal error is controllable. Also, the composite safe control action proposed to solve the new problem using Theorem \ref{coro:generalizetion-mas-safe-opt} is obtained in a sample-free manner by taking a weighted mixture of primitives and getting filtered by the CBF constraints. It  is worth to apply especially in the case when each component problem solution can be solved analytically but is expensive to compute, when consideration of effort in solving a new problem dominates the   control precision.

\section{Conclusion}
\label{sec6}
In this paper, we developed a framework of safe generalization of optimal control utilizing control barrier functions (CBFs) and linearly-solvable property in stochastic system control, in both single-agent and cooperative networked multi-agent system cases. The proposed control action simultaneously ensures optimality and guarantees safety by enforcing the CBF constraints, while minimizing the difference away from the ideal optimal control. Considering the linearity of considered CBF constraints and compositionality of linear-solvable optimal control (LSOC), we further extend the safe optimal control framework to a task generalization setting, where a weighted mixture of computed actions for component problems is taken to solve a new problem. The safety of such composite control action is incorporated by additional CBF constraints as a filter after the composition. The composite safe optimal control action is obtained in a sample-free manner and thus is less computationally-expensive, while the safety property can be reserved by the additional CBF constraint. We evaluate the proposed approach on numerical simulations of a single UAV and two cooperative UAV teams with obstacle-avoidance and target-reaching goals. The constructed composition control law can drive the teamed UAV to a new target within some acceptable error range. The error can be explained by the system stochastic noise and the error in composing a time-dependent desirability function of continuous-time dynamics. We hope the proposed strategy can be applied in scenarios when the computational cost of the component task safe optimal control solution dominates the  precision of the implemented composite control. Future work will consider the safety guarantees on the control action generalization capability on networked MASs under incomplete information where not every state information is measurable, and such scenario can simulate the real-world environment with high-fidelity. Another direction of research for future work consists of leveraging images and high-dimensional sensor data, and introducing perception-based closed-loop control to the current framework.

\bibliography{reference}
\bibliographystyle{IEEEtran}

\end{document}